\theoremstyle{plain}
    \newtheorem{theorem}{Theorem}
    \newtheorem{lemma}{Lemma}
\theoremstyle{definition}
\crefname{definition}{definition}{definitions}
\crefname{theorem}{theorem}{theorems}
\crefname{corollary}{corollary}{corollaries}
\crefname{lemma}{lemma}{lemmas}
\crefname{proposition}{proposition}{propositions}
\crefname{claim}{claim}{claims}
\crefname{remark}{remark}{remarks}
\newcommand{\R}{\mathbb{R}}
\newcommand{\Z}{\mathbb{Z}}
\newcommand{\cC}{\mathcal{C}}
\newcommand{\cP}{\mathcal{P}}
\newcommand{\cG}{\mathcal{G}}
\newcommand{\Par}[1]{\left( #1 \right)}
\newcommand{\CBra}[1]{\left\{ #1 \right\}}
\def\keywordname{{\bf Keywords:}}
\providecommand{\keywords}[1]{\def\and{{\textperiodcentered} }
\par\addvspace\baselineskip
\noindent\keywordname\enspace\ignorespaces#1}
\begin{document}

\title{A Note on the Core of 2-Matching Games}
\author[1]{Laura Sanità}
\author[2]{Lucy Verberk\(^{(\text{\Letter})}\)}
\affil[1]{
    Bocconi University of Milan, Italy. \protect \\
    {\normalsize\tt{laura.sanita@unibocconi.it}}
}
\affil[2]{
    Eindhoven University of Technology, the Netherlands. \protect \\
    {\normalsize\tt{l.p.a.verberk@tue.nl}}
}
\date{}
\maketitle

\begin{abstract}
    Cooperative \(2\)-matching games are a generalization of cooperative matching games, where the value function is given by maximum-weight \(b\)-matchings, for a vertex capacity vector \(b \leq 2\). We show how to separate over the core of \(2\)-matching games in polynomial time, fixing a small flaw in the literature, and prove the existence of a compact extended formulation for it.

    \keywords{Cooperative matching games \and Core \and 2-matching.}
\end{abstract}

\section{Introduction}

In this paper we consider cooperative \(2\)-matching games, which are a generalization of cooperative matching games, introduced by Shapley and Shubik 50 years ago \cite{Shapley1971Assignment}. An instance of cooperative matching games is defined on a pair \((G,w)\), consisting of a graph \(G = (V,E)\) with \(n\) vertices and \(m\) edges, and edge weights \(w \in \R^E_{\geq 0}\). A \emph{(cooperative) matching game} defined on \((G,w)\) is a pair \((N,\nu)\), consisting of a player set \(N = V\) and a value function \(\nu: 2^N \rightarrow \R_{\geq0}\), defined by
\begin{equation*}
    \nu(S) = w(M_S) = \sum_{e\in M_S} w_e
\end{equation*}
for all \(S\subseteq N\), where \(M_S \subseteq E\) is a maximum-weight matching in \(G[S]\), the subgraph of \(G\) induced by \(S\).
Recall that a matching is a subset \(M \subseteq E\) such that each vertex \(v \in V\) is incident with at most one edge from \(M\).

This paper focuses on an important generalization of matching games, which arises when replacing matchings with \(b\)-matchings (see e.g.\ \cite{Biro2016Stable}), where a \(b\)-matching is a subset \(M \subseteq E\) such that each vertex \(v \in V\) is incident with at most \(b_v\) edges from \(M\).
An instance of cooperative \(b\)-matching games is defined on a triple \((G,b,w)\), which in addition to the graph \(G\) and edge weights \(w\), also specifies vertex capacities \(b \in \Z^V_{\geq0}\). A \emph{(cooperative) \(b\)-matching game} defined on \((G,b,w)\) is again a pair \((N,\nu)\), but now the value function \(\nu\) is defined using maximum-weight \(b\)-matchings.
We refer to \(b\)-matchings games with \(b \leq 2\) as \emph{\(2\)-matching games}.

A fundamental concept in cooperative games is that of \emph{core}. For an arbitrary cooperative game \((N,\nu)\), the core consists of all vectors \(p \in \R^N\), called \emph{allocations}, that satisfy
\begin{equation}
\label{eq: core constraints}
\begin{split}
    & p(N) = \nu(N), \\
    & p(S) \geq \nu(S) \text{ for all } S \subset N,
\end{split}
\end{equation}
where, for any \(S \subseteq N\), \(p(S) = \sum_{i \in S} p_i\). Assuming that a group of players \(S \subseteq N\) forms a coalition, they can distribute a total value of \(\nu(S)\) among them. The equation (\(p(N) = \nu(N)\)) makes sure that the grand coalition \(N\) is formed, and the inequalities make sure that no subset of players has an incentive to leave the grand coalition to form a coalition on their own.

We are interested in the problem of separating over the core:
\begin{center}
    \emph{Determine if a given allocation \(p\) belongs to the core, or find a coalition that violates the corresponding constraint in \cref{eq: core constraints}.}
\end{center}

Separating over the core of matching games is solvable in linear time; given \(p \in \R^N_{\geq0}\) with \(p(N) = \nu(N)\), it is equivalent to checking if \(p_i + p_j \geq w_{ij}\) for all edges \(ij \in E\).
In fact, the core admits a compact linear programming (LP) formulation (given by the dual of the standard LP relaxation of maximum-weight matching).
This was first shown for bipartite graphs by Shapley and Shubik~\cite{Shapley1971Assignment}, and later generalized to arbitrary graphs by e.g.\ \cite{Deng1999Algorithmic,Paulusma2001Complexity}.
Differently, Biró et al.~\cite{Biro2016Stable} show that separating over the core of \(b\)-matching games (which they call multiple partners matching games) is co-NP-complete, even on bipartite graphs with \(b = 3\) and \(w = 1\) \cite[theorem 13]{Biro2016Stable}. On the other hand, \(2\)-matching games seem to still behave nicely: they state that separating over the core of \(2\)-matching games is solvable in polynomial time \cite[theorem 12]{Biro2016Stable}. 
However, their proof contains a small flaw.

\paragraph{Our results}
Our first result is to fix the flaw in the proof of \cite[theorem 12]{Biro2016Stable}, hence showing
\begin{theorem}
    \label{thm: separating}
    Separating over the core of \(2\)-matching games is solvable in polynomial time.
\end{theorem}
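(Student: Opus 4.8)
The plan, following the strategy of \cite{Biro2016Stable} but repairing it, is to reduce separation over the core to a single polynomially solvable $b$-matching computation. The trivial cases are easy: if $p_v<0$ for some $v$, then the singleton $\{v\}$ violates $p(\{v\})\ge\nu(\{v\})=0$; and since $\nu(\vertexSet)$ is the weight of a maximum-weight $b$-matching of $\graph$ and hence computable in polynomial time, we may check $p(\vertexSet)=\nu(\vertexSet)$, reporting $\vertexSet$ itself if $p(\vertexSet)<\nu(\vertexSet)$ and ``not in the core'' if $p(\vertexSet)>\nu(\vertexSet)$. So assume from now on that $p\ge 0$ and $p(\vertexSet)=\nu(\vertexSet)$.

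Under these assumptions, $p$ lies in the core if and only if $w(M)\le p(\vertexSet(M))$ for every $b$-matching $M$ of $\graph$, where $\vertexSet(M)$ denotes the set of endpoints of the edges of $M$. For the forward direction take $S=\vertexSet(M)$ and use $\nu(S)\ge w(M)$; conversely, for any $S$ let $M_S$ be a maximum-weight $b$-matching of $\graph[S]$, so $\nu(S)=w(M_S)\le p(\vertexSet(M_S))\le p(S)$ because $\vertexSet(M_S)\subseteq S$ and $p\ge 0$. Hence, recalling \cref{eq: core constraints}, separation is exactly the question of whether
\begin{equation*}
    \max\{\, w(M)-p(\vertexSet(M)) : M \text{ a }b\text{-matching of }\graph \,\} \le 0,
\end{equation*}
the maximum being attained (e.g.\ by $M=\emptyset$); and if it is positive, an optimal $M^\star$ yields the violating coalition $S=\vertexSet(M^\star)$, since $p(S)=p(\vertexSet(M^\star))<w(M^\star)\le\nu(S)$.

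It remains to evaluate this maximum in polynomial time, which I expect to be the crux and the place where care is needed. Writing $\vertexSet_{\ge 1}=\{v:b_v\ge 1\}$ and noting $\vertexSet(M)\subseteq\vertexSet_{\ge 1}$, we have $w(M)-p(\vertexSet(M))=-p(\vertexSet_{\ge 1})+\big(w(M)+p(\{v\in\vertexSet_{\ge 1}:\deg_M(v)=0\})\big)$, so it suffices to maximize $w(M)$ together with a prize $p_v$ for each capacitated vertex that $M$ leaves untouched. I would encode this in an auxiliary graph $\graph'$ obtained from $\graph$ by attaching, to every vertex $v$ with $b_v=1$, a new capacity-$1$ vertex joined to $v$ by an edge of weight $p_v$, and, to every vertex $v$ with $b_v=2$, a self-loop at $v$ of weight $p_v$ (a self-loop may be taken at most once and then counts $2$ towards the degree of $v$); the original capacities are kept on $\vertexSet$. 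We may choose a maximum-weight $b$-matching of $\graph'$ in which the pendant edge at a capacity-$1$ vertex $v$ is present exactly when $v$ is unmatched by the original edges, and — since the self-loop consumes both units of capacity of $v$ — the self-loop at a capacity-$2$ vertex $v$ is present exactly when $v$ meets no original edge; hence the maximum weight of a $b$-matching of $\graph'$ equals $p(\vertexSet_{\ge 1})$ plus the quantity above. As maximum-weight $b$-matching remains polynomially solvable in the presence of self-loops (e.g.\ by the standard reduction to maximum-weight matching), this completes the algorithm, and $M^\star$ is recovered by restricting an optimal solution in $\graph'$ to the original edges.

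Finally, it is worth flagging the reason the plain ``edge-by-edge'' test $p_i+p_j\ge w_{ij}$ — which suffices for ordinary matching games — breaks down for $2$-matching games, since this is the source of the difficulty: because a vertex may be used twice, those inequalities only yield $w(M)\le 2\,p(\vertexSet(M))$, and a path coalition $v_0v_1\cdots v_k$ with capacity $2$ at its interior vertices can violate the core even when every edge inequality holds. The gadget above circumvents this, the key point being that the self-loop for a capacity-$2$ vertex detects ``degree exactly $0$'', which a mere pendant edge (detecting only ``unsaturated'') cannot do.
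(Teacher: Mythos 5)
Your proof is correct, but it takes a genuinely different route from the paper. The paper follows Biró et al.\ in reducing core membership (after checking $p(N)=\nu(N)$) to the cycle constraints \eqref{eq: core cycles} and path constraints \eqref{eq: core paths}, separates the cycle constraints via the tramp steamer (minimum ratio cycle) problem, and repairs the flawed path separation by adding a zero-weight edge $st$ to an auxiliary graph $G(s,t)$ and running tramp steamer on a few variants (enumerating the end-edges at capacity-one endpoints); this negative-cycle viewpoint is also what the paper later exploits for the compact extended formulation. You instead observe that, once $p\ge 0$ and $p(N)=\nu(N)$ hold, membership in the core is equivalent to $w(M)\le p(V(M))$ for \emph{every} $b$-matching $M$, and you decide this by a single maximum-weight $b$-matching computation on an auxiliary graph with prizes for uncovered vertices (pendant edges at capacity-one vertices, self-loops counting two at capacity-two vertices); the violating coalition $V(M^\star)$ then comes out directly (and is automatically a proper subset of $N$, since otherwise $w(M^\star)>p(N)=\nu(N)$ would contradict $M^\star$ being a $b$-matching of $G$). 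This is arguably more self-contained, avoiding the path/cycle decomposition and the ratio-cycle machinery, at the price of needing a maximum-weight \emph{simple} $b$-matching oracle that handles self-loops. That last step is the only place you wave your hands, and it is exactly where care is needed: the naive gadget that splits the loop's prize into two half-weight edges to the two capacity units of $v$ is wrong (a matching could collect half a prize while blocking only one unit), just as half-weight schemes fail for ordinary edge gadgets. A correct implementation is to split each capacity-two vertex $v$ into two copies, let the loop become a single ordinary edge of weight $p_v$ between the two copies, and use the standard Tutte edge gadgets with \emph{equal} (not halved) weights on the subdivided pieces; alternatively, your objective is a weighted degree-constrained subgraph problem whose degree sets have gaps of length at most one, so polynomial solvability also follows from the optimal general matching results of Dudycz and Paluch cited in the paper. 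With that step made precise, your argument is a valid alternative proof of the theorem, though (unlike the paper's route) it does not by itself set up the extended formulation of the second result.
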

Having a polynomial-time separation oracle over the (convex) set of core allocations, implies that we can optimize over the corresponding polytope in polynomial time via the ellipsoid method \cite{Khachiyan1979Polynomial,Grötschel1981Ellipsoid,grotschel2012geometric}. A natural question is then whether there exists a compact extended formulation for it. In fact, there exist polytopes for which a polynomial-time separation oracle is known, but no compact extended formulation exists, such as the perfect matching polytope \cite{Rothvoss2017matching}. Our second result is a positive answer to this question.
\begin{theorem}
    \label{thm: ext form}
    There exists a compact extended formulation that describes the core of \(2\)-matching games.
\end{theorem}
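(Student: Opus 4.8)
The plan is to reduce the infinitely many core inequalities to a single structured family, indexed by the cycles of an auxiliary graph, and then to exhibit a polynomial-size lifted description of that family. The first step is a purely combinatorial description of the core. Since every $b$-matching with $b\le 2$ is a vertex-disjoint union of paths and cycles, and since for each $S$ the value $\nu(S)$ is attained by a $b$-matching $M_S$ with $V(M_S)\subseteq S$, one checks that $p$ is in the core if and only if $p\ge 0$, $p(N)=\nu(N)$, and $p(V(H))\ge w(H)$ for every path $H$ and every cycle $H$ of $G$ that is itself a valid $b$-matching (i.e.\ has its internal, resp.\ all, vertices of capacity $2$). Here ``$\Rightarrow$'' uses the coalition $S=V(H)$, and ``$\Leftarrow$'' uses $p(S)\ge p(V(M_S))$ (by $p\ge 0$) after summing the path/cycle inequalities over the components of $M_S$; single-vertex coalitions already force $p\ge 0$.

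Next I would fold the path inequalities into the cycle inequalities. Let $G^+$ be $G$ with an extra vertex $r$ joined to every $v\in V$ by an edge of weight $0$, let $w$ be extended by $0$, and set $p_r:=0$. A simple cycle of $G^+$ either avoids $r$, and is a cycle of $G$, or uses exactly two edges $rs,rt$ at $r$ together with an $s$--$t$ path of $G$. On a cycle $H$, using $\deg_H\equiv 2$, one has
\[
  p(V(H))-w(H)=\sum_{e=uv\in E(H)}\Big(\tfrac12(p_u+p_v)-w_e\Big)=:\langle\chi^H,q(p)\rangle,
\]
where $q(p)$ is the affine vector with $q(p)_e=\tfrac12(p_u+p_v)-w_e$; and for a cycle of $G^+$ through $r$ the two root edges contribute exactly $\tfrac12(p_s+p_t)$, which is precisely the gap between $p(V(P))-w(P)$ and $\langle\chi^P,q(p)\rangle$ for the underlying path $P$. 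Hence the core equals $\{p\ge 0:\ p(N)=\nu(N)\ \text{and}\ \langle\chi^H,q(p)\rangle\ge 0\ \text{for every $b$-admissible circuit $H$ of $G^+$}\}$, where ``$b$-admissible'' means that the corresponding path or cycle of $G$ is a valid $b$-matching.

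Then I would dualize this against a cone. The last condition says exactly that $q(p)$ lies in the dual $K^\ast$ of the cone $K$ generated by the incidence vectors of the $b$-admissible circuits of $G^+$. The plan is to give $K$ a polynomial-size extended formulation: a vector $z\ge 0$ lies in $K$ if and only if it is a non-negative circulation — certified by flow variables $f_{uv},f_{vu}\ge 0$ with $z_{uv}=f_{uv}+f_{vu}$ and flow conservation at every vertex — that in addition satisfies $2z_e\le\sum_{e'\ni v}z_{e'}$ for every edge $e$ and every endpoint $v$ of $e$, together with $\sum_{e\in\delta_G(v)}z_e\le z_{rv}$ for every capacity-$1$ vertex $v$ (which restricts such $v$ to being a path endpoint). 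Writing this as $K=\{z:\ \exists f,\ Az+Bf\ge 0\}$, a routine Farkas/LP-duality computation gives $K^\ast=\{A^\top\mu:\ \mu\ge 0,\ B^\top\mu=0\}$, itself a polynomial-size extended formulation; substituting the affine expression $q(p)$, intersecting with $\{p\ge 0\}$, and adjoining the single scalar equation $p(N)=\nu(N)$ (with $\nu(N)$ precomputed in polynomial time) yields a polynomial-size polyhedron in $(p,\mu)$ whose projection onto $p$ is exactly the core.

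The main obstacle is the extended formulation of the circuit cone $K$ used in the last step — concretely, the inclusion that every non-negative circulation obeying $2z_e\le\sum_{e'\ni v}z_{e'}$ decomposes into circuits of length at least $3$. One has to argue that the slack in $2z_e\le\sum_{e'\ni v}z_{e'}$ always allows a cancelling ``both-way'' flow on an edge to be rerouted around a longer circuit, while the circulation requirement is what forbids positive weight on bridges. I expect this graph-theoretic lemma to be the crux: as the perfect matching polytope illustrates, the polynomial-time separability behind \cref{thm: separating} does not on its own guarantee a compact formulation, so one genuinely has to produce the small certifying cone, not merely separate over the core.
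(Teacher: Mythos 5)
Your overall architecture (reduce the core to nonnegativity, the value equation, and path/cycle inequalities; fold the path inequalities into circuit inequalities of an augmented graph via zero-weight auxiliary edges; then express the resulting condition as membership of the affine vector \(q(p)\) in the dual of a circuit cone and dualize) is sound in spirit and runs parallel to the paper, which does the same thing with per-pair auxiliary graphs \(G(s,t)\) carrying a zero-weight edge \(st\) instead of your single apex vertex \(r\). But the step you yourself flag as the crux is a genuine gap, and in fact the lemma you hope for is false. Your proposed extended formulation of the circuit cone \(K\) — nonnegative circulation variables \(f_{uv},f_{vu}\ge 0\) with \(z_{uv}=f_{uv}+f_{vu}\) and conservation, plus the local conditions \(2z_e\le z(\delta(v))\) — does not describe the cycle cone. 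First, the circulation requirement is vacuous: setting \(f_{uv}=f_{vu}=z_{uv}/2\) satisfies conservation for \emph{every} \(z\ge 0\), so it forbids nothing (in particular it does not exclude bridges). Second, even the star (degree) inequalities together with any such certificate are insufficient, because membership in the cycle cone requires the cut condition \(x_e\le x(B\setminus e)\) for \emph{all} cuts \(B\), not just vertex stars (\cref{thm:Seymour}). Concretely, take two disjoint triangles joined by a single bridge and put \(z\equiv 1\) on all seven edges: all your conditions hold (at the bridge endpoints \(2\cdot 1\le 3\), elsewhere \(2\cdot 1\le 2\), and the half/half circulation works), yet \(z\) is not in the cycle cone since the bridge lies on no cycle. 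A 2-edge-connected counterexample exists too: join the two triangles by two edges and give them values \(1\) and \(1/2\); the 2-edge cut forces equality of these two values in the cycle cone, which no local condition sees. Since your Farkas/dualization step rests entirely on this description of \(K\), the proposed formulation is not established (and, as written, would certify membership in the dual of a cone that is too large or simply wrong).

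The repair is essentially the paper's route: use Seymour's full cut description of the cycle cone, note that the exponentially many cut inequalities involving a fixed edge \(\overline{e}=uv\) amount to requiring every \(uv\)-cut in \(G'\setminus\overline{e}\) to have \(x\)-capacity at least \(x_{\overline{e}}\), replace them by a polynomial-size flow certificate \(y^{\overline{e}}\) via max-flow min-cut (one flow system per edge \(\overline{e}\)), and then take LP duality so that satisfaction of the no-negative-cycle condition (with costs \(c_e=p'_e-w_e\)) becomes the existence of polynomially many auxiliary variables \((\gamma,\lambda)\) constrained affinely in \(p\), as in \cref{eq:LP_using_cuts} and \cref{eq: flow dual}. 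One further caution if you keep the single-apex graph \(G^+\): you must verify that intersecting the (correct) cycle cone with your capacity-\(1\) inequalities \(\sum_{e\in\delta_G(v)}z_e\le z_{rv}\) yields exactly the cone of admissible circuits; this does work out (any circuit decomposition of a feasible point assigns zero weight to circuits passing through a capacity-\(1\) vertex off its root edge), but it is an argument you would have to make, whereas the paper sidesteps it by enumerating \(O(n^4)\) small variants of \(G(s,t)\) in which the degree of capacity-\(1\) endpoints is restricted to two.
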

\section{Separating over the Core}\label{sec: separting core}

The first important observation in \cite{Biro2016Stable} is that for any \(S \subseteq N\), a maximum weight \(b\)-matching in \(G[S]\) is composed of cycles and paths, which means the core of 2-matching games can alternatively be described by the following (smaller) set of constraints:
\begin{subequations}
\begin{align}
    \label{eq: core total value}
    & p(N) = \nu(N), \\
    \label{eq: core cycles}
    & p(C) \geq w(C), \quad \text{for all cycles } C \in \cC, \\
    \label{eq: core paths}
    & p(P) \geq w(P), \quad \text{for all paths } P \in \cP.
\end{align}
\end{subequations}
Here, \(\cC\) stands for the set of cycles \(C \subseteq E\) in \(G\) with \(b_i = 2\) for all \(i \in V(C)\), and \(\cP\) stands for the set of paths \(P \subseteq E\) with \(b_i = 2\) for all inner vertices on \(P\). We shortened \(p(V(C))\) and \(p(V(P))\) to \(p(C)\) and \(p(P)\), respectively.
With this observation, separating over the core for a given vector \(p\) reduces to checking whether \(p(N) = \nu(N)\), which can be done in polynomial time (a maximum-weight \(b\)-matching in a graph \(G\) can be computed in polynomial time, see e.g.~\cite{Letchford2008Odd}), and to separating over the set of constraints for cycles, and the set of constraints for paths.

Biro et al.~\cite{Biro2016Stable} show how to separate over the set of cycle constrains, by reducing the problem to the \emph{tramp steamer} problem (also known as the minimum cost-to-time ratio problem), which we will introduce now.
Let \(G = (V,E)\) be a graph with edge weights \(p', w \in \R^E_{\geq0}\). The tramp steamer problem is to find a cycle \(C \subseteq E\) of \(G\) that maximizes the ratio \(w(C) / p'(C)\). The tramp steamer problem is well-known to be solvable in polynomial time (see e.g.\ \cite{Dantzig1966Finding, Lawler1976combinatorial, Eiselt2000Shortest}).

The following lemma is proved in \cite{Biro2016Stable} (see the proof of their Theorem 12). We report a proof for completeness.
\begin{lemma}[\cite{Biro2016Stable}]
    \label{lem: separting cycles}
    Separating over the constraints for cycles in \cref{eq: core cycles} is solvable in polynomial time.
\end{lemma}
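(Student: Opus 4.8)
The plan is to reduce separation over the cycle constraints \eqref{eq: core cycles} to several instances of the tramp steamer problem, one for each possible ``type'' of cycle we must handle. Fix an allocation \(p\) with \(p(N) = \nu(N)\). A cycle constraint \(p(C) \geq w(C)\) is violated precisely when \(w(C) - p(C) > 0\); equivalently, writing \(p'_i := p_i\) placed on vertices and \(w\) on edges, we want to know whether there is a cycle \(C \in \cC\) with \(\sum_{e \in C} w_e > \sum_{i \in V(C)} p_i\). This is naturally a cost-to-time ratio question: a cycle is violating iff \(w(C)/p(C) > 1\) (handling the degenerate case \(p(C) = 0\) separately, which forces all \(p_i = 0\) on \(V(C)\) and then the constraint is violated iff \(w(C) > 0\)).

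The key steps, in order, are as follows. First, restrict attention to the subgraph \(G' = (V', E')\) induced by the vertex set \(V' = \{\, i \in V : b_i = 2 \,\}\), since by definition every cycle in \(\cC\) lives entirely in \(G'\); this removes the capacity side-condition and leaves us with ``find any cycle of \(G'\) with \(w(C)/p(C) > 1\).'' Second, move the vertex weights \(p_i\) onto the edges: for an edge \(e = ij\) set \(\tilde{p}_e\) so that summing \(\tilde{p}_e\) over a cycle reproduces \(p(V(C))\) — the standard trick is \(\tilde{p}_{ij} = \tfrac{1}{2}(p_i + p_j)\), which works because each vertex of a cycle has exactly two incident cycle-edges. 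Now the problem is exactly the tramp steamer problem on \(G'\) with edge weights \(w\) (the ``cost'') and \(\tilde{p}\) (the ``time''): compute a cycle \(C^\*\) maximizing \(w(C)/\tilde{p}(C)\), and report a violated constraint iff this maximum exceeds \(1\). Third, address the degenerate denominators: tramp steamer requires positive time, so preprocess by checking whether any cycle through only \(p\)-zero vertices exists with positive \(w\)-weight (a reachability/positive-weight-cycle check in the subgraph of \(G'\) on \(\{\,i : p_i = 0\,\}\)), and otherwise contract or ignore those vertices appropriately so that the remaining instance has strictly positive \(\tilde{p}\) on every edge of every relevant cycle.

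I expect the main obstacle — and, given the paper's framing, the precise location of the ``small flaw'' in \cite{Biro2016Stable} — to be exactly this bookkeeping around cycles with zero \(p\)-weight on some or all vertices, together with the requirement that the tramp steamer problem be run only on instances where the ratio is well-defined. A careful statement must argue that after the preprocessing, (a) no violated cycle constraint is missed, and (b) the cycle returned by the ratio maximizer, when the optimal ratio is \(> 1\), genuinely yields a set \(V(C)\) with \(p(V(C)) < w(C)\) that is a feasible coalition. Everything else — the vertex-to-edge weight transfer, the polynomial running time — follows from the polynomial solvability of the tramp steamer problem quoted above and is routine. I would therefore devote the bulk of the write-up to making the zero-weight handling airtight, and keep the reduction itself short.
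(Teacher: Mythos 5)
Your reduction is essentially the paper's own proof: restrict to the subgraph induced by $\{i : b_i = 2\}$, transfer the allocation to edge weights via $p'_{ij} = (p_i + p_j)/2$, and check whether the tramp steamer maximum $\max_{C} w(C)/p'(C)$ exceeds $1$. Your speculation that the ``small flaw'' of Bir\'o et al.\ lies in handling zero-weight cycles here is off target --- the flaw the paper fixes is in the \emph{path} separation (their layered auxiliary graph admits non-simple walks, as the example $P=(s,u^{(1)},v^{(2)},u^{(3)},t)$ shows) --- but this does not affect the correctness of your argument for the cycle lemma.
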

\begin{proof}
    Let \(N_2 = \CBra{i\in N : b_i = 2}\) and \(G_2 = G[N_2]\). In \(G_2 = (N_2,E_2)\) we transfer the given allocations \(p_i\) to the edges by setting \(p'_{ij}=\Par{p_i+p_j}/2\) for all \(ij \in E_2\). This defines edge weighs \(p' \in \R^{E_2}\) such that the core constraints for cycles are equivalent to 
    \begin{equation}
        \label{eq: tramp steamer constraint}
        \max_{C\in\cC} \frac{w(C)}{p'(C)} \leq 1.
    \end{equation}
    Hence we obtained an instance of the tramp steamer problem, which is polynomial-time solvable as mentioned before. Note that by solving the above maximization problem we either find that all the constraints for cycles in \cref{eq: core cycles} are satisfied or we end up with a particular cycle \(C\) with \(p(C) = p'(C) < w(C)\).
\end{proof}

Next, we discuss the flaw related to the separation of the path constraints in \cref{eq: core paths}.

\paragraph{Path separation of \cite{Biro2016Stable}}
    Assuming that all the constraints for cycles in \eqref{eq: core cycles} are satisfied by the given vector \(p \in \mathbb{R}^N\), they process the path constraints separately for all possible endpoints \(i_0, j_0 \in N\) (with \(i_0 \neq j_0\)) and all possible lengths \(k = 1, \ldots, n-1\). Let \(\mathcal{P}_k(i_0, j_0) \subseteq \mathcal{P}\) denote the set of \(i_0-j_0\)-paths of length \(k\) in \(G\). They construct an auxiliary graph \(G_k(i_0, j_0)\), that is a subgraph of \(G \times P_{k+1}\), the product of \(G\) with a path of length \(k\). To this end, let \(N_2^{(1)}, \ldots, N_2^{(k-1)}\) be \(k-1\) copies of \(N_2\). The vertex set of \(G_k(i_0, j_0)\) is then \(\{i_0, j_0\} \cup N_1^{(1)} \cup \cdots \cup N_2^{(k-1)}\). Denote the copy of \(i \in N_2\) in \(N_2^{(r)}\) by \(i^{(r)}\). The edges of \(G_k(i_0, j_0)\) and their weights \(\overline{w}\) are defined as
    \begin{align*}
        & i_0 j^{(1)} && \text{for } i_0 j \in E && \text{with weight } \overline{w}_{i_0 j} := p_{i_0} + p_j/2 - w_{i_0 j}, \\
        & i^{(r-1)} j^{(r)} && \text{for } i j \in E && \text{with weight } \overline{w}_{i j} := (p_i + p_j)/2 - w_{i j}, \\
        & i^{(k-1)} j_0 && \text{for } i j_0 \in E && \text{with weight } \overline{w}_{i j_0} := p_i/2 + p_{j_0} - w_{i j_0}. 
    \end{align*}
    They claim that \(p(P) \geq w(P)\) holds for all \(P \in \mathcal{P}\) if and only if the shortest \(i_0-j_0\)-path in \(G_k(i_0, j_0)\) (w.r.t.\ \(\overline{w}\)) has weight \(\geq 0\) for all \(i_0 \neq j_0\) and \(k = 1, \ldots, n-1\).
    
    However, the next example shows that this claim is not true.
    Consider the graph in \cref{fig:counter example}.
    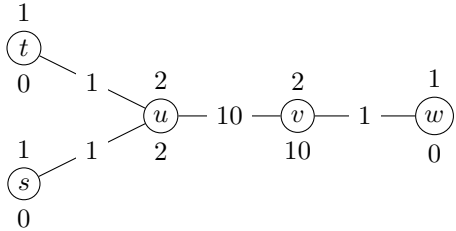
\begin{figure}[ht]
        \centering
        \begin{tikzpicture}[
            scale=.6,
            circ/.style={
                circle,
                draw=black,
                inner sep=2pt
            },
        ]
            \node[circ,label={above:\(1\)},label={below:\(0\)}] (s) at (0,-1.5) {\(s\)};
            \node[circ,label={above:\(1\)},label={below:\(0\)}] (t) at (0,1.5) {\(t\)};
            \node[circ,label={above:\(2\)},label={below:\(2\)}] (u) at (3,0) {\(u\)};
            \node[circ,label={above:\(2\)},label={below:\(10\)}] (v) at (6,0) {\(v\)};
            \node[circ,label={above:\(1\)},label={below:\(0\)}] (w) at (9,0) {\(w\)};

            \draw (s) to node[fill=white] {1} (u);
            \draw (t) to node[fill=white] {1} (u);
            \draw (u) to node[fill=white] {10} (v);
            \draw (v) to node[fill=white] {1} (w);
        \end{tikzpicture}
        \caption{Graph with edge weights \(w\) on the edges, vertex labels in the vertices, vertex capacities \(b\) above the vertices, and an allocation \(p\) below the vertices.}
        \label{fig:counter example}
    \end{figure}
    One can check that \(\nu(N) = 12\) and that the given allocation \(p\) is in the core. In the auxiliary graph \(G_4(s,t)\), defined above, there is a path of total weight strictly less than zero: \(P=(s,u^{(1)},v^{(2)},u^{(3)},t)\) has weight
    \begin{equation*}
        \overline{w}(P) = p_s + 2 p_u + p_v + p_t - w_{su} - 2 w_{uv} - w_{ut} = 14 - 22 < 0.
    \end{equation*}
    According to their claim, this should mean that \(p(P) \geq w(P)\) does not hold for all \(P \in \mathcal{P}\), i.e., \(p\) is not a core allocation.

\paragraph{Our path separation} 
    We now show how to fix the above issue, again by relying on the tramp steamer problem. 
    \begin{lemma}
        \label{lem: separting paths}
        Separating over the constraints for paths in \cref{eq: core paths} is solvable in polynomial time.
    \end{lemma}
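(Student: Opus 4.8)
The plan is to reduce the separation of \cref{eq: core paths} to polynomially many shortest-path computations, exploiting that—by \cref{lem: separting cycles}—we may assume the cycle constraints already hold. After a trivial preliminary check we may furthermore assume that $p_i\ge 0$ for every $i\in N$ (these are the core constraints for the singleton coalitions) and that $p_i+p_j\ge w_{ij}$ for every $ij\in E$ (these are the constraints for paths of length one in \cref{eq: core paths}); if one of these fails we have already found a violated constraint. It remains to separate the constraints for paths of length at least two.

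As in \cref{lem: separting cycles}, let $N_2=\{i\in N: b_i=2\}$, and on $G_2=G[N_2]$ put the edge weights $q_{uv}:=(p_u+p_v)/2-w_{uv}$. Every cycle of $G_2$ lies in $\cC$, and for $C\in\cC$ one has $q(C)=p(C)-w(C)$; hence the assumed validity of \cref{eq: core cycles} says exactly that $q$ is \emph{conservative} on $G_2$, i.e.\ every cycle of $G_2$ has nonnegative $q$-weight, and the same then holds for the induced subgraph $G[N_2\setminus\{s,t\}]$ for any $s,t$. Computing a shortest $q$-path between two prescribed vertices of an undirected graph with conservative edge weights is a classical polynomial-time task, via reduction to minimum-weight perfect matching (see e.g.\ \cite{Lawler1976combinatorial}); let $d_{s,t}(a,b)$ denote the resulting $q$-distance from $a$ to $b$ in $G[N_2\setminus\{s,t\}]$ (with $d_{s,t}(a,a):=0$).

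For each ordered pair $(s,t)$ of distinct vertices I would then compute
\[
    \mu_{s,t}:=\min\Bigl\{\,p_s+\tfrac{p_a}{2}-w_{sa}+d_{s,t}(a,b)+\tfrac{p_b}{2}+p_t-w_{bt}\ :\ a,b\in N_2\setminus\{s,t\},\ sa\in E,\ bt\in E\,\Bigr\},
\]
which is possible in polynomial time by the previous paragraph. The claim to prove is that $\min_{s\ne t}\mu_{s,t}\ge 0$ if and only if every constraint in \cref{eq: core paths} for a path of length at least two holds, and that a tuple attaining a negative $\mu_{s,t}$ yields such a violated path. One direction uses the identity
\[
    p(P)-w(P)=\Bigl(p_s+\tfrac{p_a}{2}-w_{sa}\Bigr)+q(P_{a,b})+\Bigl(\tfrac{p_b}{2}+p_t-w_{bt}\Bigr)
\]
valid for every path $P=s,a,\dots,b,t\in\cP$ of length at least two (here $a,b$ and all internal vertices of $P$ lie in $N_2$, and, since $P$ is simple with endpoints $s,t$, even in $N_2\setminus\{s,t\}$, so its $a$–$b$ subpath $P_{a,b}$ is a path of $G[N_2\setminus\{s,t\}]$ and $d_{s,t}(a,b)\le q(P_{a,b})$): it gives $\mu_{s,t}\le p(P)-w(P)$, hence $\min_{s\ne t}\mu_{s,t}\ge0$ forces all such constraints. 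For the other direction, if $(s,t,a,b)$ attains $\mu_{s,t}$ and $R$ is a shortest $q$-path from $a$ to $b$ in $G[N_2\setminus\{s,t\}]$, then $P^\star:=s,a,R,b,t$ is a genuine simple path in $\cP$—its vertices are $s$, the pairwise-distinct vertices of $R$ (all in $N_2\setminus\{s,t\}$), and $t$, with $s\ne t$—whose slack $p(P^\star)-w(P^\star)$ equals the bracketed expression; thus $\mu_{s,t}$ is exactly the minimum slack over paths of $\cP$ of length at least two with endpoints $s,t$, and $\mu_{s,t}<0$ exhibits a violated constraint.

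The step requiring care—and precisely the point where \cite{Biro2016Stable} goes wrong—is that one must minimise over genuine simple paths for the interior, rather than over walks. Indeed $q$ may well have negative edges: the length-one constraints only give $w_{uv}\le p_u+p_v$, not $w_{uv}\le (p_u+p_v)/2$, so $q_{uv}$ can be negative, and then a shortest \emph{walk} can be driven arbitrarily negative by oscillating along such an edge; this is exactly the counterexample above, where the walk $(s,u^{(1)},v^{(2)},u^{(3)},t)$ traverses the edge $uv$ twice. Working inside $G[N_2\setminus\{s,t\}]$ (which cannot contain $s$ or $t$) and invoking a shortest-\emph{simple}-path—equivalently, a minimum-weight matching—subroutine avoids this; that this subroutine runs in polynomial time is the only place the hypothesis that \cref{eq: core cycles} holds is used, since it is what makes $q$ conservative. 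Verifying the displayed slack identity in all boundary cases (e.g.\ paths of length exactly two, where $a=b$) and checking that no path of $\cP$ is missed is then routine bookkeeping.
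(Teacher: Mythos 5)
Your proof is correct, but it takes a genuinely different route from the paper. The paper also first assumes \cref{eq: core cycles} holds and dispenses with paths of length zero and one, but then, for each pair \(s \neq t\), it builds an auxiliary graph \(G(s,t) = G[N_2 \cup \{s,t\}]\) with an added zero-weight edge \(st\), transfers \(p\) to edge weights \(p'_{ij} = (p_i+p_j)/2\), and reduces the question to the same tramp steamer (maximum ratio cycle) subroutine already used in \cref{lem: separting cycles}: a violating \(s\)--\(t\) path of length \(\geq 2\) corresponds to a cycle through \(st\) with \(w(C)/p'(C) > 1\), and the cycle constraints guarantee any ratio-\(>1\) cycle found must use \(st\). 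Because a cycle through a capacity-one endpoint not using \(st\) would certify nothing, the paper additionally enumerates, for each endpoint with \(b=1\), the single other edge incident to it that is kept, giving up to \((d_s-1)(d_t-1)\) tramp steamer instances per pair. You instead work inside \(G[N_2\setminus\{s,t\}]\) with the shifted weights \(q_{uv}=(p_u+p_v)/2-w_{uv}\), observe that \cref{eq: core cycles} makes \(q\) conservative there, and enumerate the first and last edges \(sa\), \(bt\) of a candidate path, reducing to shortest \emph{simple} paths under conservative weights (computable via minimum-weight \(T\)-join/perfect matching); your slack identity and the reconstruction of \(P^\star\) are verified correctly, including the \(a=b\) boundary case. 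What each approach buys: the paper reuses one uniform subroutine (the ratio-cycle test, which is also what its compact extended formulation in the next section is built on), at the price of the case analysis on \(b_s,b_t\); your version avoids that case analysis entirely, directly computes the minimum slack over paths with given endpoints, and pinpoints exactly where the walk-vs-path confusion of \cite{Biro2016Stable} arises. One small phrasing quibble: conservativeness of \(q\) is what makes the minimum \(T\)-join value coincide with the shortest simple path value (the matching subroutine itself is always polynomial, while shortest simple path with negative cycles would be NP-hard), so it is used for correctness of the reduction rather than literally for the running time; this does not affect the validity of your argument.
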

    \begin{proof}
        We assume that all the constraints for cycles in \cref{eq: core cycles} are satisfied by the given vector \(p\in\R^N\). We first process all paths of length zero and one separately, i.e., the vertices and edges, by checking
        \begin{equation*}
            \begin{split}
                &p_i \geq 0, \quad \text{for all } i \in N, \\
                &p_i + p_j \geq w_{ij}, \quad \text{for all } ij \in E.
            \end{split}
        \end{equation*}
        We process the remaining constraints for paths separately for all possible endpoints \(s\neq t \in N\). We create an auxiliary graph \(G(s,t)\) as the subgraph of \(G\) induced by the vertex set \(N_2 \cup \CBra{s,t}\). We add the edge \(st\) to \(G(s,t)\) with \(w_{st} = 0\), replacing the original edge \(st\) if it exists. Like we did for the constraints for cycles, we transfer the given allocations \(p_i\) to the edges by setting \(p'_{ij} = \Par{p_i + p_j}/2\) for all edges \(ij\) in \(G(s,t)\). If \(b_s = b_t = 2\), then we solve the tramp steamer problem on \(G(s,t)\) directly. If \(b_s = 1\) and \(b_t = 2\), then we solve \(d_s-1\) (where \(d_s\) is the degree of \(s\) in \(G(s,t)\)) instances of the tramp steamer problem: we remove all edges incident to \(s\) in \(G(s,t)\) except for \(st\) and one other edge \(e\), and solve the tramp steamer problem on this variant of \(G(s,t)\). We repeat this for all possible edges \(e\) incident to \(s\), unequal to \(st\). The case \(b_s = 2\) and \(b_t = 1\) can be handled similarly. If \(b_s = b_t = 1\), then we solve \((d_s-1)(d_t-1)\) instances of the tramp steamer problem: we remove all edges incident to both \(s\) and \(t\) except \(st\), one other edge \(e\) incident to \(s\) and one other edge \(f\) incident to \(t\), and solve the tramp steamer problem on this variant of \(G(s,t)\). We repeat this for all possible combinations of the edges \(e\) and \(f\).
    
        Suppose there is a path \(P\in\cP\) of length at least two, such that \(p(P) < w(P)\). Let \(P=(s;e_1,\ldots,e_k;t)\), \(k\geq2\). There is a variant of \(G(s,t)\) which contains both \(e_1\) and \(e_k\). We construct a cycle \(C\) in this graph from \(P\): \(C=(s;e_1,\ldots,e_k,ts;s)\). Then:
        \begin{equation*}
            \begin{split}
                & w(C) = w_{e_1} + \cdots + w_{e_k} + w_{st} = w(P) + 0 = w(P), \\
                & p'(C) = p'_{e_1} + \cdots + p'_{e_k} + p'_{st} = p(P),
            \end{split}
        \end{equation*}
        which means we have \(w(C)/p'(C) = w(P)/p(P) > 1\). So, solving the tramp steamer problem on this graph, we will find that the maximum is \(>1\).
    
        Suppose we solve the tramp steamer problem on some variant of \(G(s,t)\), for some \(s\) and \(t\), and find that the maximum is \(>1\), and that this is attained by the cycle \(C\). 
        It is straightforward to check that we must have \(s, t \in V(C)\), and in particular \(st \in C\), as otherwise \(b_i = 2\) for all \(i \in V(C)\), which means \(C\in\cC\), contradicting that all the constraints for cycles in \cref{eq: core cycles} are satisfied.
        Let \(P\) be the \(st\)-path obtained from \(C\) by removing \(st\). Then:
        \begin{equation*}
            \begin{split}
                & w(P) = w(C) - w_{st} = w(C) - 0 = w(C), \\
                & p(P) = p'(C),
            \end{split}
        \end{equation*}
        which means we have \(w(P)/p(P) = w(C)/p'(C) > 1\). So the constraint for \(P\) is violated.
    \end{proof}
    
    Note that \cref{lem: separting cycles} and \cref{lem: separting paths}, together with checking \eqref{eq: core total value}, yield a proof of \cref{thm: separating}.
\section{Compact Extended Formulation}

To give a compact extended formulation of the core, we essentially need to rewrite the inequality in \cref{eq: tramp steamer constraint} in a compact form. 

Suppose we are given a graph \(G' = (N',E')\), with edge weights \(p', w \in \R^{E'}_{\geq0}\), and that we want to check whether this graph satisfies the inequality in \cref{eq: tramp steamer constraint}. As a first step, define the edge costs \(c_{ij} = p'_{ij} - w_{ij}\) for all edges \(ij \in E'\). Note that \cref{eq: tramp steamer constraint} is violated if and only if \(G'\) contains a negative cost cycle \(C\) with respect to \(c\), since 
\begin{equation*}
    w(C)/p'(C) > 1
    \iff 0 > \sum_{ij \in C} c_{ij} = \sum_{ij \in C} \Par{p'_{ij} - w_{ij}} = p'(C) - w(C).
\end{equation*}

We therefore focus on checking if a graph \(G' = (N',E')\) with edge costs \(c \in \R^{E'}\) contains a negative cost cycle.
There are several efficient ways to detect negative cost cycles. For example one can rely on the notion of potential in undirected graphs with general edge weights, as described in~\cite{Sebo1997Potentials}. A combinatorial algorithm also follows from~\cite{Dudycz2021Optimal} (that actually works for a broader class of generalized matching problems). For our extended formulation, we rely on the LP formulation designed in~\cite{Barahona1993Reducing} for finding negative cost cycles.
Recall that
a cut \(B \subseteq E'\) is a set of edges of the form \(\delta(X)\) for some \(\emptyset \neq X \subset N'\). We need the following theorem.
\begin{theorem}[Seymour \cite{Seymour1979Sum}]
    \label{thm:Seymour}
    The cone generated by the incidence vectors of the cycles of a graph is defined by the system
    \begin{equation*}
    \begin{split}
        & x_e - x(B \setminus e) \leq 0, \text{ for each cut } B, \text{ for every edge } e \in B, \\
        & x \geq 0.
    \end{split}
    \end{equation*}
\end{theorem}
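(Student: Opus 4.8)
The plan is to prove the two inclusions between the cycle cone $\mathcal K:=\operatorname{cone}\{\chi^C: C \text{ a cycle of the graph}\}$, where $\chi^C\in\{0,1\}^{E}$ denotes the incidence vector of $C$, and the polyhedral cone $P$ cut out by the system in the statement. (Whether ``cycle'' means a simple circuit or an arbitrary even subgraph is immaterial: an even subgraph is an edge-disjoint union of circuits, so both generate the same cone.) The forward inclusion $\mathcal K\subseteq P$ is the routine one. As the system is homogeneous, it suffices to verify the generators, and nonnegativity of $\chi^C$ is clear; for a cut $B=\delta(X)$ and an edge $e\in B$ I would use that a cycle meets every cut in an even number of edges. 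If $e\notin C$ then $\chi^C_e=0\le \chi^C(B\setminus e)$, and if $e\in C$ then $|C\cap B|$ is even and at least $2$, so $C$ uses a second edge of $B$ and $\chi^C(B\setminus e)\ge 1=\chi^C_e$.

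For the reverse inclusion $P\subseteq\mathcal K$ I would give an explicit decomposition by peeling cycles off a given $x\in P$. Let $F=\{e:x_e>0\}$ be the support. A first lemma, which already uses the constraints, is that no edge of $F$ is a bridge of $(V,F)$: if $e=uv$ were one and $X$ is the vertex set of the component of $u$ in $(V,F\setminus\{e\})$, then $e$ is the only edge of $F$ crossing $\delta(X)$, so $x(\delta(X)\setminus e)=0$ and the constraint forces $x_e\le 0$, a contradiction. Hence, whenever $x\ne 0$, the support contains a cycle. Given a cycle $C\subseteq F$, I would replace $x$ by $x-\lambda\chi^C$ and increase $\lambda$ up to the largest value $\lambda^\ast$ keeping all coordinates and all constraints nonnegative. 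Writing $s_{B,e}=x(B\setminus e)-x_e\ge0$ for the slacks, the rate of change of $s_{B,e}$ along this move is $2\cdot[e\in C]-|C\cap B|$, which is always $\le 0$; in particular every slack and every coordinate is nonincreasing throughout the whole procedure. Since there are finitely many cuts and edges, tight constraints can only accumulate and the support can only shrink, so after finitely many peelings we reach $x=0$ and obtain $x$ as a nonnegative combination of the peeled cycles, i.e.\ $x\in\mathcal K$.

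The crux, and the step I expect to be the main obstacle, is that a single peeling can make progress ($\lambda^\ast>0$) only if the cycle $C$ is compatible with the constraints that are currently tight: for every tight pair $(B,e)$ one needs $|C\cap B|=2$ when $e\in C$ and $C\cap B=\emptyset$ when $e\notin C$. Thus everything reduces to the following existence statement: whenever $x\in P$ is nonzero, the support contains a cycle compatible, in this sense, with the family of all tight cuts. Following Seymour, I would prove this by first passing to a cross-free family of tight cuts via a suitable uncrossing (using submodularity of the cut function), which arranges them laminarly, and then routing a cycle through $F$ that crosses each tight cut the minimum possible (even) number of times. An equivalent, dual route is to show that every conservative weighting $y$, i.e.\ one with $y(C)\ge0$ for all cycles $C$, is a nonnegative combination of the constraint normals $\chi^{\delta(X)}-2\chi^{e}$ and $\chi^{f}$; by linear-programming duality this is the same assertion, and the same cut-uncrossing analysis is what makes it work. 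In short, the two ``easy halves'' and the no-bridge lemma are routine, while the construction of a compatible cycle is the technical heart of the argument.
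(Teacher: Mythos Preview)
The paper does not prove this theorem: it is quoted as a known result of Seymour and used as a black box to justify the LP formulation \eqref{eq:LP_using_cuts}. So there is no proof in the paper to compare against, and I will assess your outline on its own merits.

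Your forward inclusion $\mathcal K\subseteq P$ and the no-bridge lemma are correct, and the computation that the slack $s_{B,e}$ moves at rate $2[e\in C]-|C\cap B|\le 0$ under peeling is accurate; this does imply that tight constraints stay tight and the support can only shrink, so termination follows \emph{provided} every peeling step makes positive progress. You also correctly isolate the real content: for every nonzero $x\in P$ one must exhibit a cycle $C$ in the support with $|C\cap B|=2$ whenever a tight pair $(B,e)$ has $e\in C$, and $C\cap B=\emptyset$ whenever $e\notin C$.

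The gap is in how you resolve that crux. ``Uncross the tight cuts to a laminar family and then route a cycle'' is more a slogan than an argument here, for two reasons. First, the tight objects are (cut, edge) \emph{pairs}, not cuts: submodularity gives $x(\delta(X_1\cap X_2))+x(\delta(X_1\cup X_2))\le 2x_{e_1}+2x_{e_2}$, but to replace $(\delta(X_1),e_1)$ and $(\delta(X_2),e_2)$ by tight laminar pairs you still need each uncrossed cut to be tight for some specific edge in it, and that depends on where $e_1,e_2$ land relative to $X_1\cap X_2$ and $X_1\cup X_2$. Second, even granted a laminar family, the compatibility condition for the case $e\notin C$ forces $C$ to lie entirely on one side of $B$; arranging this simultaneously for all tight pairs, while also ensuring $C$ passes through the distinguished edge $e$ whenever it touches a tight cut, is precisely the hard combinatorial step, and ``cross each cut the minimum even number of times'' does not address it. Seymour's own argument, and later proofs via $T$-joins or Seb\H{o}'s undirected potentials (which the paper itself cites for the related negative-cycle detection), handle this through mechanisms that go beyond a bare uncrossing, so you should either work out the uncrossing in full detail or switch to one of those routes.
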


Using the system of constraints from \cref{thm:Seymour}, we can design an LP formulation as in \cite[section 3]{Barahona1993Reducing}, where the authors define the LP below with the goal of minimizing the cost of a cycle. (In \cite{Barahona1993Reducing} they also add the constraint \(\sum_{e \in E'} x_e = 1\), because they are interested in cycles of minimum mean weight, but here this constraint is not needed.)
\begin{equation}
\label{eq:LP_using_cuts}
\begin{split}
    \min \quad 
        & \sum_{e \in E'} c_e x_e \\
    \text{s.t.} \quad
        & x_e - x(B \setminus e) \leq 0, \text{ for each cut } B, \text{ for every edge } e \in B \\
        & x \geq 0
\end{split}
\end{equation}

One observes that \(G'\) contains a negative cost cycle if and only if there exists a solution to this LP whose objective value is negative (indeed, the LP in this case will be unbounded). This is easily seen as if \(C\) is a cycle with negative cost, its characteristic vector \(x^C\) yields an LP solution with negative objective value. Vice versa, if \(x^*\) is a feasible solution for the LP with negative objective value, by \cref{thm:Seymour} \(x^*\) can be expressed as a conic combination of cycles, implying that at least one such cycle must have negative cost.

To make the above LP compact, we rely on flows. Recall that an \(uv\)-cut is a set of edges of the form \(\delta(X)\) where \(X\) contains exactly one of \(u\) and \(v\). For a fixed edge \(\overline{e} = uv\), the system of inequalities consisting of \(x \geq 0\) and the inequalities of \eqref{eq:LP_using_cuts} for \(\overline{e}\), is then equivalent to
\begin{equation}
\label{eq:rewritten_constr}
\begin{split}
    &  x(B) \geq x_{\overline{e}}, \text{ for each } uv \text{-cut } B \text{ in } G' \setminus \overline{e}, \\
    & x \geq 0,
\end{split}
\end{equation}
which tells us that all \(uv\)-cuts in \(G' \setminus \overline{e}\) have capacity (w.r.t.\ \(x\)) at least \(x_{\overline{e}}\). By the max flow min cut theorem of Ford and Fulkerson \cite{FordFulkerson1962Flows}, this is equivalent to the existence of a \(uv\)-flow (w.r.t.\ the capacity function \(x\)) in \(G' \setminus \overline{e}\) of value \(x_{\overline{e}}\). Therefore there exists an \(x\) that satisfies the constraints in \cref{eq:rewritten_constr} if and only if there is there exists a pair \((x,y)\) that satisfies
\begin{equation*}
\begin{split}
    & \sum_{j : ij \in E' \setminus \overline{e}} \Par{y_{ij} - y_{ji}} = \begin{cases}
        0, & \text{ if } i \neq u, i \neq v, \\
        x_{\overline{e}}, & \text{ if } i = u, \\
        -x_{\overline{e}}, & \text{ if } i = v, \\
    \end{cases}
     \quad \text{ for all } i \in N', \\
    & 0 \leq y_{ij}, y_{ji} \leq x_e, \quad \text{ for all } e = ij \in E' \setminus \overline{e}.
\end{split}
\end{equation*}
Finally, we can rewrite the LP in \cref{eq:LP_using_cuts} as
\begin{equation*}
\label{eq:LP_compact}
\begin{split}
    \min \quad
        & \sum_{e \in E'} c_e x_e, \\
    \text{s.t.} \quad
        & \sum_{j : ij \in E' \setminus \overline{e}} \Par{y_{ij}^{\overline{e}} - y_{ji}^{\overline{e}}} = \begin{cases}
            0, & \text{ if } i \neq u, i \neq v, \\
            x_{\overline{e}}, & \text{ if } i = u, \\
            -x_{\overline{e}}, & \text{ if } i = v, \\
        \end{cases}\\
         &\hspace{5.5cm} \text{ for all } i \in N' \text{ and } \overline{e} = uv \in E', \\
        & 0 \leq y_{ij}^{\overline{e}}, y_{ji}^{\overline{e}} \leq x_e, \quad \text{ for all } e = ij \in E' \setminus \overline{e} \text{ and } \overline{e} \in E'.
\end{split}
\end{equation*}
The dual of this LP is
\begin{equation}
\label{eq: flow dual}
\begin{split}
    \max \quad
        & 0 \\
    \text{s.t.} \quad
        & \begin{rcases}
            \gamma_i^{\overline{e}} - \gamma_j^{\overline{e}} - \lambda_{ij}^{\overline{e}} \\
            \gamma_j^{\overline{e}} - \gamma_i^{\overline{e}} - \lambda_{ji}^{\overline{e}}
        \end{rcases}
        \leq 0, \quad \text{ for all } ij \in E' \setminus \overline{e} \text{ and } \overline{e} \in E', \\
        & \gamma_v^{\overline{e}} - \gamma_u^{\overline{e}} + \sum_{ij \in E' \setminus \overline{e}} \Par{\lambda_{ij}^{\overline{e}} + \lambda_{ji}^{\overline{e}}} \leq c_{\overline{e}}, \quad \text{ for all } \overline{e} = uv \in E', \\
        & \lambda_{ij}^{\overline{e}}, \lambda_{ji}^{\overline{e}} \geq 0, \quad \text{ for all } ij \in E' \setminus \overline{e} \text{ and } \overline{e} \in E'.
\end{split}
\end{equation}

Combining all of the steps, a graph \(G'\) satisfies the constraint in \cref{eq: tramp steamer constraint} if and only if the LP in \cref{eq: flow dual} attains a feasible solution.

We are now ready to state the compact extended formulation. For convenience, let \(\cG\) be the set of graphs consisting of \(G_2\) and all variants of \(G(s,t)\) for all \(s \neq t \in N\). From \cref{sec: separting core}, we know we need to check \(p(N) = \nu(N)\), \(p_i \geq 0\) for all \(i \in N\), \(p_i + p_j \geq w_{ij}\) for all \(ij \in E\), and that all graphs \(G' = (N', E') \in \cG\) satisfy the constraint in \cref{eq: tramp steamer constraint}. So, in total we get
\begin{equation}
\label{eq: extended formulation}
\begin{gathered}
    p(N) = \nu(N), \\
    p_i \geq 0, \quad \text{for all } i \in N, \\
    p_i + p_j \geq w_{ij}, \quad \text{for all } ij \in E, \\
    \begin{drcases}
        & \begin{drcases}
        \gamma_i^{\overline{e}} - \gamma_j^{\overline{e}} - \lambda_{ij}^{\overline{e}} \\
        \gamma_j^{\overline{e}} - \gamma_i^{\overline{e}} - \lambda_{ji}^{\overline{e}}
    \end{drcases}
    \leq 0, \quad \text{ for all } ij \in E' \setminus \overline{e} \text{ and } \overline{e} \in E', \\
    & \gamma_u^{\overline{e}} - \gamma_v^{\overline{e}} + \sum_{ij \in E' \setminus \overline{e}} \lambda_{ij}^{\overline{e}} + \lambda_{ji}^{\overline{e}} \leq (p_u+p_v)/2 - w_{\overline{e}}, \quad \text{ for all } \overline{e} = uv \in E', \\
    & \lambda_{ij}^{\overline{e}}, \lambda_{ji}^{\overline{e}} \geq 0, \quad \text{ for all } ij \in E' \setminus \overline{e} \text{ and } \overline{e} \in E'.
    \end{drcases}\\
    \hfill \text{ for all } \ G' = (N', E') \in \cG.
\end{gathered}
\end{equation}
The size of this formulation easily follows from the size of \(\cG\). Consider the graph \(G(s,t)\) for some \(s \neq t \in N\). In the worst case, \(b_s = b_t = 1\), which means there are \((d_s-1)(d_t-1)\) variants of \(G(s,t)\). Therefore 
\begin{equation*}
    |\cG| \leq 1 + \sum_{s \neq t \in N} (d_s-1)(d_t-1) = O(n^4).
\end{equation*}
The formulation in \cref{eq: extended formulation} has
\begin{equation*}
    n + |\cG| \cdot (O(m^2) + O(n m)) = O(n^4 m^2) + O(n^5 m)
\end{equation*}
variables, and 
\begin{equation*}
    1 + n + m + |\cG| \cdot (O(m^2) + O(m) + O(m^2)) = O(n^4 m^2)
\end{equation*}
constraints, i.e., it has polynomial size. This proves \cref{thm: ext form}.

\section{Acknowledgements}
    The authors are supported by the NWO VIDI grant VI.Vidi.193.087.
    The authors thank Kanstantsin Pashkovich for pointing out \cite{Barahona1993Reducing}.

\bibliographystyle{plain}
\bibliography{bib}

\end{document}